\newtheorem{theorem}{Theorem}
\newenvironment{proof}{Proof:}{\qed}
\newenvironment{denseitems}{\list{$\bullet$}{\itemsep0pt\parsep0pt}}{\endlist}
\def\squareforqed{\hbox{\rlap{$\sqcap$}$\sqcup$}}
\def\qed{\ifmmode\squareforqed\else{\unskip\nobreak\hfil
\penalty50\hskip1em\null\nobreak\hfil\squareforqed
\parfillskip=0pt\finalhyphendemerits=0\endgraf}\fi}
\def\tin{\textsc{tin}\xspace}
\def\tins{\textsc{tin}s\xspace}
\def\gis{\textsc{gis}\xspace}
\def\lidar{\textsc{lidar}\xspace}
\def\io{\textsc{i/o}\xspace}
\def\ios{\textsc{i/o}'s\xspace}
\def\tsm{\textsc{TerraStream}\xspace}
\def\tfl{\textsc{TerraFlow}\xspace}
\def\sort{\mathit{Sort}}
\def\Sort(#1){({#1}/B) \log_{M/B} ({#1}/B)}
\def\scan{\mathit{Scan}}
\title{Simple I/O-efficient flow accumulation on grid terrains}
\author{%
Herman~Haverkort\thanks{TU Eindhoven, the Netherlands, cs.herman@haverkort.net}
\hspace{.3cm}
Jeffrey Janssen\thanks{Realworld Systems, Culemborg, the Netherlands, twilight@nmotion.nl}
}
\date{}
\begin{document}
\maketitle

\begin{abstract}
The \emph{flow accumulation} problem for grid terrains takes as input a matrix of flow directions, that specifies for each cell of the grid to which of its eight neighbours any incoming water would flow. The problem is to compute, for each cell $c$, from how many cells of the terrain water would reach $c$. We show that this problem can be solved in $O(\scan(N))$ \ios for a terrain of $N$ cells. Taking constant factors in the \io-efficiency into account, our algorithm may be an order of magnitude faster than the previously known algorithm that is based on time-forward processing and needs $O(\sort(N))$ \ios.
\end{abstract}

\section{Introduction}
Current remote-sensing technology provides high-resolution terrain data for geographic information systems (\gis). In particular, with \lidar technology one can get detailed elevation models of the earth surface. These are usually made available in the form of a triangulated irregular network (\tin) or a grid: a matrix with elevation values for points in a regular grid on the surface of the earth. One important application of such models are hydrological studies: analysing the flow of water on a terrain, for example to study the effects of possible human intervention or to estimate risks of erosion.

The grid models may be as large as several dozen gigabytes so that they do not fit in the main memory of a computer at once. Therefore hydrological analysis requires efficient algorithms that scale well and are designed to minimise the swapping of data between main memory and disk. Throughout the current decade Arge et al.\ have designed and published such algorithms for the following pipeline of computations on terrain data:\begin{denseitems}
\item constructing an elevation model from raw elevation samples~\cite{cloud,terrastream};
\item (partial) \emph{flooding:} eliminating spurious depressions from the model, so that streams do not appear to be interrupted by virtual dams that result from sampling errors~\cite{unionfind,terraflow,terrastream};
\item \emph{flow routing:} determining in what direction water could flow on each point of the terrain, in such a way that from every point of the terrain water would follow a non-ascending path to the lowest point of the watershed that contains that point~\cite{terraflow,terrastream};
\item \emph{flow accumulation:} computing for every point of the terrain the size of the region from which water flows to that point~\cite{terraflow,gridproblems,terrastream};
\item \emph{hierarchical watershed labelling:} giving each point of the terrain a label that indicates its position in the watershed hierarchy~\cite{pfafstetter,terrastream}.
\end{denseitems}
Arge et al.\ implemented these algorithms in the form of a system called \tsm~\cite{terrastream}.
While the published algorithms have been shown to be very effective, in this paper we will show that for some of these algorithms alternatives exist that may process grid models (but not \tins) up to an order of magnitude faster.

\paragraph*{Analysing I/O-efficiency}
In this paper we analyse the efficiency of algorithms with the standard model that was defined by Aggarwal and Vitter~\cite{iomodel}. In this model, a computer has a memory of size $M$ and a disk of unbounded size. The disk is divided into blocks of size $B$. Data is transferred between memory and disk by transferring complete blocks: transferring one block is called an ``\io''. Algorithms can only operate on data that is currently in main memory; to access the data in any block that is not in main memory, it first has to be copied from disk. If data in the block is modified, it has to be copied back to disk later, at the latest when it is evicted from memory to make room for another block. In this paper we will be concerned with the \io-efficiency of algorithms: the number of \ios they need as a function of the input size $N$, the memory size $M$, and the block size $B$. As a point of reference, scanning $N$ consecutive records from disk takes $\scan(N) = \Theta(N/B)$ \ios; sorting takes $\sort(N) = \Theta(\Sort(N))$ \ios in the worst case~\cite{iomodel}. It is sometimes assumed that $M = \Omega(B^2)$.

We distinguish \emph{cache-aware} algorithms and \emph{cache-oblivious} algorithms. Cache-aware algorithms may use knowledge of $M$ and $B$ (and to some extent even control $B$) and they may use this to control which blocks are kept in memory and which blocks are evicted. Cache-oblivious algorithms do not know $M$ and $B$ and cannot control which blocks are kept in memory: the caching policy is left to the hardware and the operating system. Nevertheless cache-oblivious algorithms can often be designed and proven to be \io-efficient~\cite{flpr-coa-99}. The idea is to design the algorithm's pattern of access to locations in input and output files such that effective caching is achieved by any reasonable general-purpose caching policy (such as least-recently-used replacement) for any values of $M$ and $B$. As a result, any bounds that can be proven on the \io-efficiency of a cache-oblivious algorithm do not only apply to the transfer of data between disk and main memory, but also to the transfer of data between main memory and the various levels of smaller caches. However, in practice cache-oblivious algorithms cannot always match the performance of cache-aware algorithms that are tuned to specific values of $M$ and $B$.

\paragraph*{Our results}
We present and analyse several algorithms for flow accumulation on grid terrains:\begin{denseitems}
\item an ``\io-na\"ive'' algorithm that processes the grid row by row;
\item a variant of the above that processes the row-by-row data in so-called Z-order;
\item a variant of the above that processes data that is stored in Z-order;
\item a cache-aware \io-efficient algorithm based on separators;
\item a variant of the above that processes data that is stored in Z-order;
\item a cache-oblivious variant of the above;
\item and for comparison: the algorithm based on time-forward processing from Arge et al.~\cite{terraflow,gridproblems,terrastream}
\end{denseitems}
Our results are summarised in Table~\ref{tab:results}.
We find that the last algorithm induces a significant overhead from sorting the input cells into topological order. This is not only because the sorting itself takes \io, but also because sorting the cells makes it necessary to store the coordinates in the grid for each cell: the coordinates are needed to sort the cells back into row-by-row order when the computation is complete. Our new algorithms avoid this overhead: they do not sort. The best of our algorithms are asymptotically more efficient than time-forward processing (under mild assumptions), and because the coordinates of a grid cell can always be deduced from its location in the file, there is no overhead from storing coordinates. In practice the difference in performance may be up to an order of magnitude, and preliminary experimental results indicate that our algorithms are fast indeed. For comparison: the authors of \tsm reported 455 minutes for flow accumulation by time-forward processing, working on a grid of similar size and hardware that appeared to be slightly faster than ours.
\begin{table}
\def\arraystretch{1.25}
\begin{tabularx}{\hsize}{|Xr|lr|r|r|}
\hline
algorithm & files & \multicolumn{2}{c|}{asymptotic number of \ios} & \multicolumn{1}{c|}{\io-volume /} & running time\\
          & & worst case & realistic & \multicolumn{1}{c|}{(in+output)} & \multicolumn{1}{c|}{$N = 3.5 \cdot 10^9$}\\
\hline\hline
\multicolumn{6}{|l|}{flow accumulation algorithms}\\
\hline
``na\"ive'' row-by-row scan & \textsf{R}                  & $O(N)$                           & $O(N/\sqrt B)${ $^{\textsf{C\hphantom{T}}}$} &         & 111 min.\hphantom{*}\\
``na\"ive'' Z-order-scan & \textsf{R}     &$O(N)$                           & $O(\scan(N))${ $^{\textsf{CT}}$} &         &     \\
``na\"ive'' Z-order-scan & \textsf{Z}        & $O(N/\sqrt B)$                   & $O(\scan(N))${ $^{\textsf{C\hphantom{T}}}$} &         &  41 min.\hphantom{*}\\
cache-aware separators & \textsf{R}                   & \multicolumn{2}{c|}{$O(\scan(N))${ $^{\textsf{AT}}$}} & 2.0--6.6{ $^{\textsf{A}}$} &  39 min.\hphantom{*}\\
cache-aware separators & \textsf{Z}   & \multicolumn{2}{c|}{$O(\scan(N))${ $^{\textsf{A\hphantom{T}}}$}} &     1.1{ $^{\textsf{A}}$}  & \\
cache-oblivious separators & \textsf{R}              & \multicolumn{2}{c|}{$O(\scan(N))${ $^{\textsf{T\hphantom{A}}}$}}  & & \\
cache-oblivious separators & \textsf{Z}              & \multicolumn{2}{c|}{$O(\scan(N))${ $^{\textsf{T\hphantom{A}}}$}}  & & 118 min.\hphantom{*}\\
\hline
time-forward proc.~\cite{terraflow,gridproblems,terrastream} & \textsf{RZ} & \multicolumn{2}{c|}{$O(\sort(N))${ $^{\textsf{\hphantom{TA}}}$}}                    &   7.8--32.0{ $^{\textsf{A}}$} & \\
\hline\hline
\multicolumn{6}{|l|}{conversion to/from Z-order}\\
\hline
\multicolumn{2}{|l|}{Z-order scan} & \multicolumn{2}{c|}{$O(\scan(N))${ $^{\textsf{T\hphantom{A}}}$}} &  & 88 min.\hphantom{*}\\
\multicolumn{2}{|l|}{row-by-row scan} & \multicolumn{2}{c|}{$O(\scan(N))${ $^{\textsf{\rlap{S}\hphantom{TA}}}$}} & 2.0\hphantom{ $^{\textsf{A}}$} & \\
\multicolumn{2}{|l|}{by sorting} & \multicolumn{2}{c|}{$O(\sort(N))${ $^{\textsf{\hphantom{TA}}}$}} & 4.0--6.0{ $^{\textsf{A}}$} & \\
\hline
\end{tabularx}
\caption{%
\textsf{R}: input/output files in row-by-row order;\quad
\textsf{Z}: input/output files in Z-order;\quad
\textsc{RZ}: results hold for both formats.\quad
\textsf{A}: requires knowledge of $M$ and/or knowledge or tuning of~$B$;\quad
\textsf{C}: requires the \emph{confluence assumption} (see Section~\ref{sec:confluence});\quad
\textsf{S}: requires that $\Theta(\sqrt B)$ rows of the input fit in memory;\quad
\textsf{T}: requires tall-cache assumption ($M \geq c B^2$ for some constant~$c$).\hfill\break
The \io-volume is the number of bytes transferred relative to the size of the input and output files, for a range of reasonable values for $N$, $M$, and $B$ (see Section~\ref{sec:parameters}).}
\label{tab:results}
\end{table}
This suggests that, while the time-forward processing algorithms are more flexible (they are easier to adapt to triangulated terrains and to so-called multiple-flow-direction models), their flexibility comes at a price when processing grid terrains. Simple alternatives such as our algorithms based on Z-order or our cache-aware algorithm are therefore worth to be considered for practical applications.

In the following section we discuss some preliminaries: we introduce the \emph{confluence assumption} which is helpful when analysing the \io-efficiency of algorithms on realistic inputs, and we discuss how to traverse a grid in Z-order and how to convert a grid file from row-by-row order to Z-order and back. In the next section we present and analyse our flow accumulation algorithms and experimental results.
In Section~\ref{sec:otherstages} we discuss possible applications of our ideas to other stages of the hydrological analysis pipeline.

\section{Preliminaries}

\subsection{The parameters of the I/O-model}
\label{sec:parameters}
Sometimes we will require the \emph{tall-cache assumption} ($M \geq c B^2$ for some constant $c$); when this is the case we will indicate this explicitly.
When estimating \io-volumes, we will explicitly consider the scenarios $M = 2^{30}$ bytes, $B = 2^{16}$ bytes, and $M = 2^{30}$ bytes, $B = 2^{14}$ bytes, in both cases with an input grid of size $N = 2^{32}$ cells (this is roughly the size of the largest grid we experimented with).

\subsection{The confluence assumption}\label{sec:confluence}
In the next section we will see that some algorithms would perform very poorly when given a grid of $\sqrt N \times \sqrt N$ cells in which a river flows from the right to the left, making $c_1 \sqrt N$ meanders with an amplitude of $c_2 \sqrt N$ cells (see Figure~\ref{fig:meanders}, left), for some constants $c_1$ and $c_2$. However, it does not seem to make all that much sense to take such cases into account in asymptotic analysis: that would suggest that with growing resolution and size of our grids, rivers would get more and larger meanders. But the major features of real drainage networks would never change just because our grids get bigger. To capture this intuition, we pose the \emph{confluence assumption}, defined as follows.

For any square $Q$ of $d \times d$ cells, let $Q'$ be the square of $3d \times 3d$ cells centered on $Q$.
Let the \emph{first far cells} of $Q$ be the cells $c'$ on the boundary of $Q'$ such that from at least one cell $c$ of $Q$ water reaches $c'$ before leaving $Q'$. The \emph{confluence assumption} states that there is a constant $\gamma$, independent of $N$ and $d$, such that any square $Q$ has at most $\gamma$ different first far cells (see Figure~\ref{fig:confluence}, middle). We call $\gamma$ the \emph{confluence constant}.
Informally, the confluence assumption says that the water that flows from any square region will collect into a small number of river valleys before it gets very far.

We will use the confluence assumption in the analysis of some of our algorithms.
\begin{figure}
\centering\includegraphics[width=\hsize]{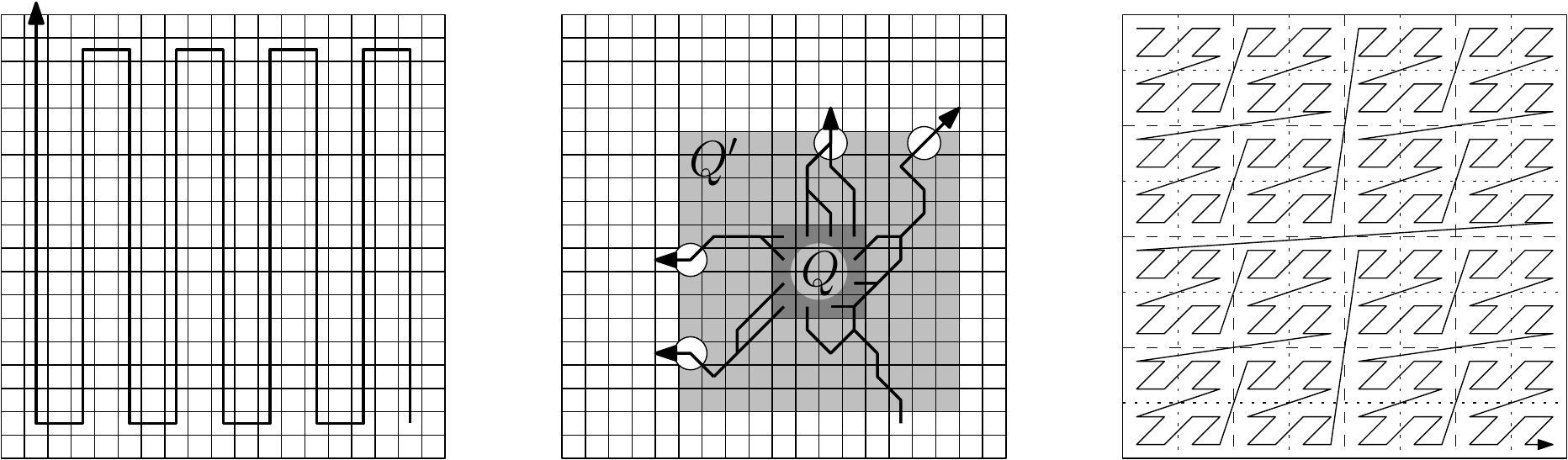}
\caption{Left: an unrealistic grid with a river with $\Theta(\sqrt N)$ meanders with an amplitude of $\Theta(\sqrt N)$.
Middle: a realistic grid. The white encircled cells are the first far cells of $Q$; it has only few first far cells.
Right: Z-order.}
\label{fig:meanders}\label{fig:confluence}\label{fig:zorder}
\end{figure}

\subsection{Storing grids in Z-order}\label{sec:Zorder}
Some of our algorithms will benefit from processing or storing the grid cells in Z-order. This order is defined as follows.
Say our input is a grid $G$ of height $h$ and width $w$. This grid is contained in a matrix $G'$ of $k \times k$ cells, where $k$ is the smallest power of 2 such that $k \geq h$ and $k \geq w$. Let the first cell of $G$ (in the upper left corner) also be the first cell of $G'$. The matrix $G'$ has four quadrants of $k/2 \times k/2$ cells. When storing $G$ in Z-order, we first store the part of $G$ in the upper left quadrant of $G'$, then the part of $G$ in the upper right quadrant of $G'$, then the part of $G$ in the lower left quadrant of $G'$, and finally the part of $G$ in the lower right quadrant of $G'$. Within each quadrant, the cells of $G$ are ordered recursively in the same way (see Figure~\ref{fig:zorder}, right).

\paragraph{Computing Z-order coordinates from row/column coordinates and vice versa}

Assume, for now, that $k = h = w$, and consider the cell $c$ in row $y$ and column $x$ of the grid. Let $y_1y_2...y_m$ and $x_1x_2...x_m$ be the binary representations of $y$ and $x$ (where $m = \log_2 k$). Then the position of $c$ in a row-by-row file is $y_1y_2...y_mx_1x_2...x_m$, and its position in a Z-order file is $y_1x_1y_2x_2...y_mx_m$. It is therefore quite easy to obtain, for any cell, its position in a row-by-row file from its position in a Z-order file and vice versa.

When the input grid is not square or when its height and width are not integral powers of two, the conversion is slightly more difficult, but in practice it can still be done efficiently as follows. Let $G$ be a grid of size $h \times w$ stored in Z-order. Then $G'$ has size $2^m \times 2^m$, where $m = \max(\lceil\log h\rceil,\lceil\log w\rceil)$. The sequence of cells of the matrix $G'$ in Z-order can now be divided into $\Theta(w+h)$ subsequences $D_1,N_1,D_2,N_2,...,D_t,N_t$, such that each sequence $D_i$ consists of cells in $G$ and each sequence $N_i$ consists of cells outside~$G$. We can now construct two arrays $F[1..t]$ and $Z[1..t]$, where $F[i] = \sum_{j=1}^{i-1} |D_j|$ and $Z[i] = \sum_{j=1}^{i-1} (|D_j| + |N_j|)$. Thus $F[i]$ and $Z[i]$ store, for each segment $D_i$, its offset in the file (which only stores the cells in $G$) and its offset in the Z-order traversal of $G'$, respectively.

An offset $p$ in the Z-order file can now be converted to row and column coordinates as follows: find the highest $i$ such that $F[i] \leq p$, compute $y_1x_1y_2x_2...y_mx_m = Z[i] + (p - F[i])$, and extract the row and column coordinates $y_1y_2...y_m$ and $x_1x_2...x_m$. Row and column coordinates can be converted to an offset in the Z-order file in a symmetric way.

Note that $F$ and $Z$ can be computed without reading the input grid. For all reasonable grid sizes $F$~and $Z$ easily fit in main memory. When this is not the case, the number of \ios incurred by swapping blocks of $F$ and $Z$ is always dominated by the number of \ios incurred by accessing the cells whose offsets or coordinates are being computed. The costs of converting coordinates can therefore be ignored in the \io-efficiency analysis of the algorithms in this paper.

From a practical point of view one should also consider the effort involved in the bit manipulations that are needed to extract $y_1y_2...y_m$ and $x_1x_2...x_m$ from $y_1x_1y_2x_2...y_mx_m$, and to find the Z-order offsets of the neighbours of a given cell $y_1x_1y_2x_2...y_mx_m$. This can also be done efficiently with a method called \emph{dilated arithmetic} and another set of look-up tables of size $\Theta(w + h)$~\cite{bitmanipulation}.

\paragraph{Converting grids from/to Z-order efficiently}
To convert a grid from row-by-row order to Z-order, we consider two very simple cache-oblivious algorithms: (A) read each cell from the row-by-row file and write it to the Z-order file, going through the grid in Z-order, and (B) read each cell from the row-by-row file and write it to the Z-order file, going through the grid row by row.

Algorithm (A), the Z-order scan, is efficient when we assume a tall cache. Then there is a $z = \Omega(B)$ such that $z$ is a power of two, a square of $z^2$ cells occupies $O(z^2/B)$ blocks of the row-by-row file on disk, and these blocks fit in memory.
The algorithm processes roughly $N / z^2$ such squares that each form a contiguous section of the Z-order file. For each such section the necessary blocks from the row-by-row file can be read and kept in memory until the section is completely processed. The total number of \ios for the conversion thus becomes at most $O(N/z^2 \cdot z^2/B) = O(\scan(N))$ for reading plus $O(\scan(N))$ for writing, making a total of $O(\scan(N))$.

If $\sqrt{B/s}$ rows of the grid fit in memory, where $s$ is the number of bytes per cell, then algorithm (B) is even more efficient. Algorithm (B), the row-by-row scan, would read every input block (from the row-by-row file) once and it would be able to keep every output block (which is $\sqrt{B/s}$ rows high) in memory until all of its cells have been read. Thus every output-block only needs to be written to disk once. In practice the assumption that $\sqrt{B/s}$ rows of the grid fit in memory seems reasonable; therefore we assume that the \io-volume of the conversion from row-by-row to Z-order is twice the input size. When $\sqrt{B/s}$ rows do not fit in memory, algorithm (B) may cause every output block to be reloaded $\sqrt{B/s}$ times, so that the algorithm uses $\Theta(N/\sqrt B)$ \ios.

Alternatively, one may adapt merge sort to convert a row-by-row file to Z-order in $\Theta(\sort(N))$ \ios, making two or three read/write passes over the input in practice.

To convert a grid from Z-order to row-by-row order, the same algorithms can be used while substituting reading for writing and vice versa.

\section{Algorithms for flow accumulation}

We assume that the input to the flow accumulation problem consists of a file $\id{FlowDir}$ that contains one byte for each cell, stored row-by-row. The byte for cell $c$ indicates the \emph{flow direction} of $c$: to which of the eight neighbours of $c$ (if any) any water that arrives at $c$ will flow. This neighbour is called the \emph{out-neighbour} of $c$. The required output is a file $\id{FlowAcc}$ with eight bytes per cell, stored row-by-row, that specify each cell's \emph{flow accumulation}. The flow accumulation of a cell $c$ is the number of cells that lie upstream of $c$, including $c$ itself. We can picture the flow accumulation of $c$ as the total number of units of rain that arrive at or pass through $c$ when each cell receives one unit of rain from the sky, which then flows down over the surface from cell to cell, following the flow directions, until it arrives at a cell that does not have an out-neighbour.

In the following we describe several algorithms for the flow accumulation problem and also discuss their \io-volume (number of \ios times block size) divided by the total size of the input and the output files. With input and output as defined above, the input+output size is 9 bytes times the number of cells in the grid.

\subsection{An I/O-na\"ive flow accumulation algorithm}\label{sec:naiveacc}

Our first algorithm is just clever enough to run in linear time, but utterly na\"ive from an \io point of view.
In this algorithm, let $c$ be the index (offset in the input file) of a cell in $\id{FlowAcc}$, and let $\id{out}(c)$ be the index of the neighbour of cell $c$ to which the incoming water of $c$ flows. Note that $\id{out}(c)$ can be computed from $c$ and the flow direction stored for $c$ in $\id{FlowDir}$.

\begin{codebox}
\Procname{$\proc{NaiveAccumulation}$}
\li Create a file $\id{FlowAcc}$ with flow value (initially 1) and marking bit (initially not set) for each cell
\li \For each cell $c$ (in row-by-row order)
\li \Do  \If $c$ is not marked
\li      \Then  $d \gets c$
\li             \While all in-neighbours of $d$ are marked and $d$ has an out-neighbour $\id{out}(d)$
\li             \Do    mark $d$
\li                    $\id{FlowAcc}[\id{out}(d)] \gets \id{FlowAcc}[\id{out}(d)] + \id{FlowAcc}[d]$
\li                    $d \gets \id{out}(d)$
                \End
         \End
    \End
\end{codebox}

The algorithm goes through all cells and gives each of them one unit of water; this water is then forwarded downstream until a cell is reached that is still waiting for incoming water from some of its neighbours. The accumulated flow then waits there until the cell has received the incoming flow from all neighbours, so that for each cell, all incoming flow is forwarded downstream in a single operation. Thus the algorithm runs in linear time.

However, the \io-behaviour of this simple algorithm can be quite bad: consider processing the terrain shown in Figure~\ref{fig:meanders} (left) in row-by-row order. None of the cells on the meandering river can have their flow forwarded downstream until the scan reaches the last cell in the lower right corner. At that point the while-loop of lines 5 to~8 will follow the whole river back to the upper left corner, possibly requiring one \io almost every step of the way. The worst-case \io-efficiency is therefore $\Theta(N)$. Fortunately, under the confluence assumption the situation does not look so grim:

\begin{theorem}\label{NaiveAccumulationRonR}
Under the confluence assumption algorithm \textsc{NaiveAccumulation} uses $O(N/\sqrt B)$ I/O's.
\end{theorem}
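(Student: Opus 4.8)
The plan is to separate the cost of the outer row-by-row scan from the cost of the downstream forwarding in the \texttt{while}-loop, and to control the latter with the confluence assumption. First I would record that, since every cell that has an out-neighbour is marked and forwards its accumulated flow exactly once, the total number of executions of lines~5--8 over the whole run is $O(N)$ (this is the linear-time bound already noted above); so the only thing left to estimate is how many of these cell accesses trigger an \io. The outer scan in line~2 reads each block of $\id{FlowDir}$ and $\id{FlowAcc}$ once, contributing $O(N/B)=O(\scan(N))$ \ios, which is already within the claimed bound. Hence all the effort goes into the forwarding steps $d\to\id{out}(d)$.

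Next I would classify these steps geometrically. In a row-by-row file a block holds $\Theta(B)$ consecutive cells, so two horizontally adjacent cells lie in the same block; a horizontal forwarding step therefore keeps us in the current block, except when it crosses a block boundary, which happens only $O(N/B)$ times in total and is negligible. Only the vertical and diagonal steps, which jump by about $\pm w$ positions in the file, can land in a block that is not already loaded. I would then fix the scale $d=\Theta(\sqrt B)$ and tile the grid into squares of $d\times d$ cells. Each tile meets $d$ rows in contiguous runs of $d\le B$ cells, so it is contained in $O(\sqrt B)$ blocks, and there are $O(N/B)$ tiles. Provided the cache can hold a constant number of tiles at once (which needs only $M=\Omega(B^{3/2})$ and is comfortably satisfied for the parameters of Section~\ref{sec:parameters}), once a while-loop enters a tile we may keep its $O(\sqrt B)$ blocks resident until the loop leaves the tile, so that every forwarding step staying inside a tile is free after the tile has been loaded once. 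It then remains to bound (i)~the number of times each tile is loaded afresh and (ii)~the number of forwarding steps that cross from one tile to another.

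This is where the confluence assumption enters, and it is the heart of the argument. Applied at scale $d=\sqrt B$, it says that the water leaving any tile $Q$ collects into at most $\gamma$ first far cells before it reaches distance $d$, i.e.\ before it leaves the $3\times3$ block of tiles around $Q$. I would use this to show that the \emph{long-range} forwarding --- steps on a downstream path that has already travelled more than $\Omega(d)$ from the tile where it was triggered --- is confined to $O(\gamma)$ streams per tile, so that each tile is entered from the outside, and hence reloaded, only $O(\gamma)$ times. Charging $O(\sqrt B)$ \ios for each such load gives $O(\gamma\sqrt B)$ \ios per tile and $O\!\left(\gamma\sqrt B\cdot N/B\right)=O(N/\sqrt B)$ \ios in total; adding the scan cost of $O(N/B)$ leaves the bound unchanged.

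The main obstacle is making step~(i) precise: turning ``at most $\gamma$ first far cells'' into a genuine bound on how often while-loops re-enter a tile, while accounting for the interplay between the marking order and the scan order --- a while-loop may run forward into not-yet-scanned cells (where it quickly stalls because some in-neighbour is still unmarked) or backward up an already-scanned river (the expensive, meander-like behaviour that confluence must tame) --- and for diagonal steps and tiles on the boundary of the grid. I expect the cleanest route is to charge each fresh block load either to the outer scan or to one of the $O(\gamma N/d^2)$ streams that confluence identifies at scale $d=\sqrt B$, and to verify that a reasonable caching policy (e.g.\ least-recently-used with the assumed memory) really does keep a tile resident for the duration of a single traversal. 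Establishing this charging rigorously, rather than the closing arithmetic, is the real work.
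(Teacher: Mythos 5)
Your overall plan is the paper's: tile the grid into $\sqrt B\times\sqrt B$ subgrids, keep a small neighbourhood of tiles resident, and invoke the confluence assumption at scale $d=\sqrt B$ to bound the expensive long-range forwarding, ending with $O(\gamma\sqrt B\cdot N/B)=O(N/\sqrt B)$. The closing arithmetic is identical. However, the step you yourself identify as the heart of the argument is stated in a form that is false, and the direction of the charging is precisely where your version and the paper's diverge. You claim that each tile is \emph{entered from the outside}, and hence reloaded, only $O(\gamma)$ times. The confluence assumption bounds \emph{outgoing} streams: each square $Q$ has at most $\gamma$ first far cells, i.e., water leaving $Q$ crosses the boundary of $Q'$ in at most $\gamma$ places. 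It says nothing about how many streams \emph{arrive} at a given tile: a tile containing the mouth of a large river system can be entered by long-range streams originating in arbitrarily many distinct source tiles (each source contributes only $\gamma$ of them, but there are $\Theta(N/B)$ sources), and a tile can be entered from an \emph{adjacent} tile at $\Theta(\sqrt B)$ distinct boundary cells (uniform parallel flow satisfies the confluence assumption provided the flow lines merge before travelling far). So ``reloaded only $O(\gamma)$ times per tile'' does not follow from the assumption.

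The paper charges in the other direction. It keeps the $3\times3$ group of tiles around the current position resident; the working set is replaced only when the path followed in lines 5--8 starts in (or enters) the middle tile $Q$ of the current group and reaches the boundary of $Q'$. Because every cell is marked and forwards its flow exactly once, distinct such events consume distinct first far cells of $Q$, so by confluence each group is responsible for at most $\gamma$ replacements, each costing $O(\sqrt B)$ I/O's; summing over the $O(N/B)$ groups gives $O(N/\sqrt B)$. This is essentially the ``charge each fresh load to one of the $O(\gamma N/d^2)$ streams'' route you name at the end as the cleanest option --- that is indeed the right fix, but it must be a per-\emph{source} charge, not a per-\emph{target} one. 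Two smaller points: your claim that the outer scan costs only $O(N/B)$ presumes that the blocks needed for the in-neighbour tests at line 5 (which span three grid rows) remain resident throughout the scan; without assuming that rows fit in memory, the paper instead counts each block as loaded $O(\sqrt B)$ times by the scan, which is $O(N/\sqrt B)$ and still within the bound. And no auxiliary memory assumption such as $M=\Omega(B^{3/2})$ appears in the theorem; the paper's proof needs only nine $\sqrt B\times\sqrt B$ tiles, i.e.\ $O(B)$ cells, to fit in memory.
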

\begin{proof}
Consider the grid to consist of square subgrids of $\sqrt{B} \times \sqrt{B}$ cells.
While running the algorithm, we maintain that at line 3, the subgrid that contains $c$ and the eight subgrids around it are in memory. When in line 5--8 we access a cell $d$ that is currently not in memory, we load the subgrid that contains $d$ and the subgrids around it into memory; upon termination of this loop we reload the subgrids that were in memory before entering the loop.

Ignoring the loop in line 5--8, each disk block of the row-by-row files is loaded into memory at most $3\sqrt{B}$ times, causing $O(N/\sqrt B)$ \ios in total. In line 5--8, the nine subgrids currently in memory are replaced (and possibly reloaded afterwards) only when we reach and finish a cell on the boundary of the nine subgrids after following a path that started from the subgrid in the middle. By the confluence assumption this happens at most a constant number of times for each group of $3 \times 3$ subgrids. Since there are $O(N/B)$ such groups (one centered on each subgrid) and each is stored across at most $O(\sqrt B)$ blocks on disk, the algorithm takes $O(N/\sqrt B)$ \ios in total.
\end{proof}

\subsection{Flow accumulation in Z-order with row-by-row files}

We can make algorithm \textsc{NaiveAccumulation} slightly smarter by changing the loop in line 2 to go through all cells in Z-order. This does not change the worst-case \io-efficiency of the algorithm (extreme rivers could still cause $\Theta(N)$ \ios). However, we would expect significantly better performance in practice:

\begin{theorem}\label{NaiveAccumulationZonRrealistic}
Under the confluence assumption and with a tall cache of size $M \geq c B^2$, algorithm \textsc{NaiveAccumulation}, running the loop on line 2 in Z-order, needs only $O(\scan(N))$ I/O's.
\end{theorem}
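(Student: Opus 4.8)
The plan is to rerun the analysis of Theorem~\ref{NaiveAccumulationRonR} at a coarser scale, replacing the $\sqrt B \times \sqrt B$ subgrids by much larger $z \times z$ squares whose surroundings still fit in memory; the tall cache is exactly what makes this pay off. First I would use $M \geq cB^2$ to fix a power of two $z = \Theta(\sqrt M)$, chosen small enough that a $3z \times 3z$ block of cells occupies only $O(z^2/B)$ disk blocks of each row-by-row file and that these blocks (for both the input and the output file) all fit in memory at once. The decisive consequence of the tall cache is that $z = \Omega(B)$: this is what makes each of the $3z$ rows of such a region span $O(z/B)$ blocks, so that the whole region costs $O(z^2/B)$ \ios to load rather than the $\Theta(z)$ \ios one would pay when $z < B$ (as in Theorem~\ref{NaiveAccumulationRonR}, where a region of $\Theta(\sqrt B)$ rows costs $\Theta(\sqrt B)$ \ios). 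Because $z$ is a power of two, a Z-order traversal visits the cells of each $z \times z$ square (aligned to the grid of such squares) consecutively, and visits the $N/z^2$ squares themselves in Z-order.

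Next I would fix the memory management as in Theorem~\ref{NaiveAccumulationRonR}, but on $z \times z$ squares: while the main loop processes a \emph{home} square I keep it together with its eight neighbours (a $3z \times 3z$ region) in memory; when the while-loop of lines 5--8 follows a flow path to a cell outside the region currently in memory, I load the $3z \times 3z$ region centred on that cell's square, and I reload the home region once the while-loop returns.

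The accounting then has two parts. For the main loop, each of the $N/z^2$ home squares is loaded with its neighbours at a cost of $O(z^2/B)$ \ios, giving $O(N/z^2 \cdot z^2/B) = O(N/B) = O(\scan(N))$ \ios in total; note that it is Z-order that keeps the number of distinct home regions down to $N/z^2$. For the while-loops I would charge every region load to the square whose $3z \times 3z$ neighbourhood the flow path is \emph{leaving}. Applying the confluence assumption with $Q$ equal to that square (so $d = z$), the water emanating from $Q$ crosses the boundary of $Q$'s $3z \times 3z$ neighbourhood at no more than $\gamma$ first far cells, and since each such cell is marked---and hence crossed---at most once during the whole run, the region centred on any fixed square is abandoned at most $\gamma$ times. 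Summing over all squares gives $O(\gamma N/z^2)$ region loads during the while-loops (plus at most one home reload per excursion), each of cost $O(z^2/B)$, for a total of $O(\gamma N/B) = O(\scan(N))$ \ios.

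The hard part will be making the charging in the previous paragraph airtight when a single excursion wanders through several squares before the ``all in-neighbours are marked'' test finally stops it. I would have to argue that each time such a path steps out of the region currently in memory, the cell through which it leaves is a genuine first far cell of the square that region is centred on---which holds because we load the region the moment the path enters that square and then follow the path inside the region until it exits---and that the while-loop test on line 5 guarantees that no first far cell is ever crossed twice. Once that is pinned down, the remaining steps are only the rescaling of Theorem~\ref{NaiveAccumulationRonR} made possible by the tall cache, and the arithmetic $N/z^2 \cdot z^2/B = N/B$ that collapses everything to $O(\scan(N))$.
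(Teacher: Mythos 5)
Your proposal is correct and follows essentially the same route as the paper's proof: tile the grid into $z\times z$ squares with $z$ a power of two so that a $3z\times 3z$ neighbourhood fits in memory (the paper takes $z=\Theta(B)$ rather than your $\Theta(\sqrt M)$, which is immaterial under the tall-cache assumption), observe that the Z-order scan loads each neighbourhood once for the main loop, and invoke the confluence assumption to bound the evictions caused by lines 5--8 exactly as in Theorem~\ref{NaiveAccumulationRonR}. Your final paragraph merely spells out in more detail the charging step that the paper delegates to the proof of Theorem~\ref{NaiveAccumulationRonR}.
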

\begin{proof}
By the tall-cache assumption, there is a $z = \Theta(B)$ such that $z$ is a power of two and the blocks containing a section of $3z \times 3z$ cells fit in memory. Now consider the grid to consist of square subgrids of $z \times z$ cells. Thus each subgrid contains $\Theta(B^2)$ cells and is stored in $\Theta(B)$ blocks in the row-by-row files. The idea is again to keep the subgrid of the current cell in memory, together with the eight subgrids around it.

Ignoring the loop in line 5--8, each group of $3 \times 3$ subgrids is loaded into memory once: each group is loaded when the Z-order scan of line 2 enters the subgrid $Q$ in the middle of the group, and the Z-order scan then traverses $Q$ completely before proceeding to the next subgrid. Thus each group is loaded at most once and each subgrid is loaded at most nine times, causing $\Theta(N/B^2 \cdot B) = \Theta(\scan(N))$ \ios. Following the same analysis as before, lines 5--8 cause each group of $3 \times 3$ subgrids to be evicted from memory at most a constant number of times, causing $O(\scan(N))$ \ios as well.
\end{proof}

\subsection{Flow accumulation in Z-order with Z-order files}

Algorithm NaiveAccumulation can be improved further by not only processing the cells in Z-order, but also using input in Z-order and producing output in Z-order. We now get the following:

\begin{theorem}
Algorithm \textsc{NaiveAccumulation}, running the loop on line 2 in Z-order and working on files in Z-order, needs only $O(N/\sqrt B)$ I/O's in the worst case. Under the confluence assumption, the required number of I/O's is $O(\scan(N))$.
\end{theorem}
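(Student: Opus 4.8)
The plan is to reuse the strategy of the two preceding proofs: partition the grid into subgrids of $\sqrt B \times \sqrt B$ cells whose sides are powers of two and are aligned to the Z-order recursion, and maintain the invariant that, at line~3, the subgrid containing the current cell together with the eight subgrids surrounding it (a $3 \times 3$ group) reside in memory. The one new ingredient, and the reason this version beats the row-by-row variant of Theorem~\ref{NaiveAccumulationRonR}, is that in a Z-order file each such subgrid of $\Theta(B)$ cells occupies a \emph{contiguous} stretch of the file and hence only $O(1)$ blocks, rather than the $O(\sqrt B)$ scattered blocks it would occupy in a row-by-row file. Consequently a $3 \times 3$ group fits in $O(1)$ blocks and a memory of size $M = \Omega(B)$ suffices; no tall-cache assumption is needed. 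First I would bound the cost of the line~2 scan alone: since the loop visits cells in Z-order, it traverses each subgrid completely before moving on, so each $3 \times 3$ group is brought in once (when the scan enters its central subgrid) and each subgrid is loaded $O(1)$ times; with $O(N/B)$ subgrids of $O(1)$ blocks this contributes $O(N/B) = O(\scan(N))$ I/O's in both settings.

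For the bound under the confluence assumption I would argue exactly as in Theorems~\ref{NaiveAccumulationRonR} and~\ref{NaiveAccumulationZonRrealistic} that the work in lines~5--8 evicts each $3 \times 3$ group only $O(1)$ times. Taking $d = \sqrt B$ in the confluence assumption, the $3d \times 3d$ square centered on a subgrid $Q$ is precisely its $3 \times 3$ group, and the assumption caps at $\gamma$ the number of distinct boundary cells through which flow starting in $Q$ can leave this group; hence the group's nine subgrids are replaced (and restored) at most $O(\gamma)$ times while $Q$ is the central subgrid. Summing over the $O(N/B)$ groups, each of $O(1)$ blocks, gives $O(N/B) = O(\scan(N))$ I/O's for lines~5--8, which together with the scan yields the claimed $O(\scan(N))$ bound.

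The worst-case bound of $O(N/\sqrt B)$ requires a different, amortized argument, and this is the step I expect to be the crux. Here I can no longer bound the number of evictions per group, so instead I would charge evictions to path length. The key geometric observation is that whenever the loop of lines~5--8 re-centres the $3 \times 3$ group on the subgrid newly containing $d$, the current cell $d$ sits in that central subgrid, so the path must traverse at least one full subgrid of width $\sqrt B$ before it can leave the new group and trigger the next re-centring; thus consecutive re-centrings within a single execution of the loop are $\Omega(\sqrt B)$ path steps apart, and so is the first re-centring from the group around the starting cell $c$. Since each downstream step marks a previously unmarked cell, the total number of steps taken over all executions of lines~5--8 is at most $N$, so the number of re-centrings is $O(N/\sqrt B)$. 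The restoration reload performed when a loop terminates is needed only for executions that re-centred at least once, and each such execution already contributed $\Omega(\sqrt B)$ steps, so these restorations also number $O(N/\sqrt B)$. As every reload touches $O(1)$ blocks, lines~5--8 cost $O(N/\sqrt B)$ I/O's, which dominates the $O(\scan(N))$ scan and gives the worst-case bound. The main obstacle is keeping this amortization clean: in particular verifying the $\Omega(\sqrt B)$ spacing of reloads and correctly accounting for the per-execution restoration reload, so that a large number of short loop executions cannot secretly accumulate to $\Theta(N)$ I/O's.
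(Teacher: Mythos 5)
Your proof is correct. The setup (subgrids of $\sqrt B \times \sqrt B$ cells occupying $O(1)$ blocks each in a Z-order file, with a $3\times 3$ group of subgrids kept in memory) and the confluence-assumption part match the paper, which for that part simply points back to the proofs of the two preceding theorems; your version is in fact slightly more careful, since you instantiate the confluence argument with $d=\sqrt B$ rather than with the $\Theta(B)$-sized subgrids of the row-by-row-file theorem, thereby avoiding the tall-cache assumption that that proof relies on. The worst-case bound, however, is argued by a genuinely different route. The paper does not track re-centrings of the $3\times 3$ group at all: it observes that a subgrid is (re)loaded only when one of its $O(\sqrt B)$ boundary cells is accessed (charging the post-loop restoration of a subgrid to the last boundary cell of that subgrid accessed before eviction), that each cell is accessed only $O(1)$ times over the whole run, and hence that each of the $O(N/B)$ subgrids is loaded $O(\sqrt B)$ times at a cost of $O(1)$ blocks per load, giving $O(N/\sqrt B)$. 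Your argument instead amortizes globally over path length: consecutive re-centrings within (and across) executions of lines 5--8 are separated by $\Omega(\sqrt B)$ newly marked cells, and at most $N$ cells are ever marked, so there are $O(N/\sqrt B)$ re-centrings and restorations, each costing $O(1)$ blocks. Both amortizations are valid and yield the same bound; the paper's per-subgrid boundary-cell counting is a little shorter and sidesteps the geometric spacing claim you flag as the crux (which does hold: after re-centring, the current cell lies in the central subgrid, and every access is within distance one of the current cell, so the path must cross a full $\sqrt B$-wide ring subgrid before touching anything outside the group), while your version has the merit of making the caching strategy and the accounting of the restoration reloads fully explicit.
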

\begin{proof}
As in the proof of Theorem~\ref{NaiveAccumulationRonR}, we consider the grid to be divided into subgrids of size $\sqrt B \times \sqrt B$. In Z-order files, these occupy only a constant number of blocks each. Such a subgrid $Q$ is loaded into memory when we access a cell on its boundary, or when it is reloaded after completing the loop in lines 5--8 of the algorithm. In the latter case we can charge the reloading of block $Q$ to the access to the boundary cell of $Q$ that was last accessed before $Q$ was evicted from memory. Since each subgrid has only $O(\sqrt B)$ cells on its boundary and each cell of the grid is accessed only a constant number of times, each of the $O(N/B)$ subgrids is loaded only $O(\sqrt B)$ times, causing $O(N/\sqrt B)$ \ios in total.

For the \io-efficiency under the confluence assumption, see the proof of Theorem~\ref{NaiveAccumulationZonRrealistic}.
\end{proof}

\subsection{Cache-aware flow accumulation using separators}\label{sec:cacheawareaccumulation}

We will now describe a slightly more involved algorithm that achieves an \io-efficiency of $O(\scan(N))$ even in the worst case. Unlike the previous algorithms, this algorithm is cache-aware: it needs to know $M$ and~$B$. The algorithm is based on separators much in the same way as the single-source shortest paths algorithm of Arge et al.~\cite{gridproblems} (but with a much smaller ``reduced'' graph). We present a cache-oblivious variant of our algorithm in Section~\ref{sec:cacheobliviousaccumulation}.

Let $z = \Theta(\sqrt M)$ be chosen such that a subgrid of size $z \times z$ fits in memory (while leaving a bit of space to store additional information about its boundary cells). Our cache-aware algorithm processes the grid in subgrids of size $z \times z$, where the boundary cells of each subgrid are shared with the neighbouring subgrids (except on the outer boundary of the grid). Let $S$ be the set of boundary cells of all subgrids. For any such subgrid $Q$, let $\id{interior}(Q)$ be the set of cells of $Q$ that do not lie on the boundary of $Q$, and let $R$ be the union of all subgrid interiors, that is, all cells except those in $S$. Given a subgrid $Q$ and a cell $c$ on its boundary, we define its local destination $\id{dest}(Q,c)$ as follows: if the out-neighbour of $c$ lies outside $Q$, then $\id{dest}(Q,c)$ is undefined, otherwise it is the first boundary cell of $Q$ that lies downstream of $c$ (if it exists). The algorithm now proceeds in three phases.

In the first phase, the rain that falls on each cell of $R$ is forwarded downstream to the first cell of $S$ that is reached (if any). The rain is only stored at that cell of $S$, not at the cells on the way. In the second phase, the rain that arrived at each cell $c$ of $S$, together with the rain that falls on $c$, is forwarded to all cells of $S$ downstream of $c$. In the third phase, the rain that falls on each cell of $R$ and the rain that arrived at each cell of $S$ is forwarded to all cells of $R$ that lie downstream of it, until another cell of $S$ is reached (or a cell without an out-neighbour). The result of these three phases is that the rain that falls on any grid cell $c$ is forwarded to all cells downstream of $c$ (see Figure~\ref{fig:cacheawarephases} for an illustration, Figure~\ref{fig:cacheawarecode} for pseudocode).

\begin{figure}
\centering
\includegraphics[width=\hsize]{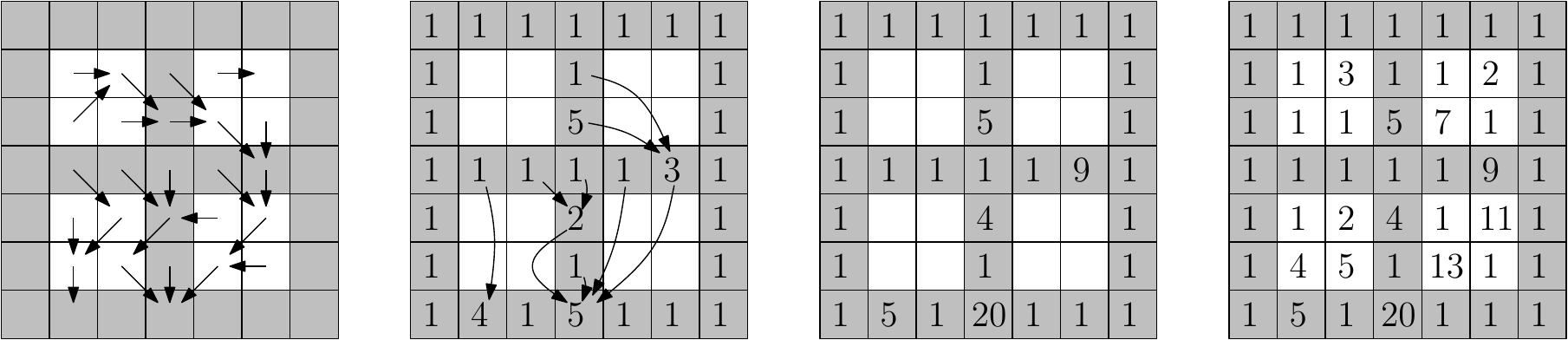}
\caption{Flow accumulation in three phases. The shaded cells are the cells of $S$. From left to right: input; contents of \id{SNeighbours} and \id{SFlowAcc} after the first phase; contents of \id{SFlowAcc} after the second phase; contents of \id{FlowAcc} after the third phase.}
\label{fig:cacheawarephases}
\end{figure}

\begin{figure}
\begin{codebox}
\Procname{$\proc{CacheAwareAccumulation}$}
\li Create file $\id{FlowAcc}$ with flow value (initially 1) and marking bit (initially not set) for each cell
\li Create file $\id{SNeighbours}$ with out-neighbour (initially undefined) for each cell of $S$
\li Create file $\id{SFlowAcc}$ with flow value (initially 1) and marking bit (not set) for each cell of $S$
\li \Comment phase one: push flow from subgrid interiors to $S$
\li \For each subgrid $Q$
\li \Do  run \proc{NaiveAccumulation} on $\id{interior}(Q)$ in memory,
\li      \hbox{}\quad\quad\quad\quad\quad\quad starting with zero flow on the cells of $Q \cap S$, and
\li      \hbox{}\quad\quad\quad\quad\quad\quad leaving $\id{out}(d)$ undefined for each cell of $Q \cap S$
\li      \For each cell $c$ on the boundary of $Q$
\li      \Do  store flow accumulation of $c$ in $\id{SFlowAcc}[c]$
\li           \hbox{}\quad\quad\quad\quad\quad\quad (adding it to any previously stored value)
\li           \If $\id{dest}(Q,c)$ is defined
\li           \Then $\id{SNeighbours}[c] \gets \id{dest}(Q,c)$
              \End
         \End
    \End
\li \Comment phase two: compute flow accumulation for $S$
\li Run line 2--8 of \proc{NaiveAccumulation} on $\id{SFlowAcc}$, using neighbour relations in $\id{SNeighbours}$
\li \Comment phase three: recompute flow in subgrid interiors, now including flow that comes in from $S$
\li \For each subgrid $Q$
\li \Do  \For each cell $c$ on the boundary of $Q$
\li      \Do $\id{FlowAcc}[c] \gets \id{SFlowAcc}[c]$
\li          \If $\id{out}(c)$ lies in $\id{interior}(Q)$
\li          \Then $\id{FlowAcc}[\id{out}(c)] \gets \id{FlowAcc}[\id{out}(c)] + \id{SFlowAcc}[c]$
             \End
         \End
\li      run line 2--8 of \proc{NaiveAccumulation} on $\id{interior}(Q)$, using the file $\id{FlowAcc}$,
\li      \hbox{}\quad\quad\quad\quad\quad\quad leaving $\id{out}(d)$ undefined whenever $\id{out}(d)$ lies on the boundary of $Q$
    \End
\end{codebox}
\caption{Pseudocode for cache-aware separator-based flow accumulation.}
\label{fig:cacheawarecode}
\end{figure}

\begin{theorem}\label{CacheAwareAccumulation}
With a tall cache of size $M \geq c B^2$, algorithm \textsc{CacheAwareAccumulation} needs only $O(\scan(N))$ I/O's.
\end{theorem}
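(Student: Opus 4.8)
The plan is to split the algorithm's \io into two parts that can be bounded separately: the cost of streaming the subgrid interiors through memory in phases one and three, and the cost of every operation that touches the separator set $S$ together with its auxiliary files $\id{SFlowAcc}$ and $\id{SNeighbours}$. The crucial consequence of the tall-cache assumption is a bound on $|S|$. There are $\Theta(N/z^2) = \Theta(N/M)$ subgrids, each contributing $O(z) = O(\sqrt{M})$ boundary cells, so $|S| = O(N/\sqrt{M})$; since $M \geq cB^2$ gives $\sqrt{M} \geq \sqrt{c}\,B$, this is $|S| = O(N/B) = O(\scan(N))$. The same inequality $N/\sqrt{M} = O(N/B)$ will be used repeatedly. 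Its force is that any part of the computation that accesses the data associated with $S$ only $O(|S|)$ times costs $O(\scan(N))$ \ios, even under the pessimistic accounting that charges one \io per access.

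First I would bound phases one and three. By the choice $z = \Theta(\sqrt{M})$ a whole subgrid of $z \times z = \Theta(M)$ cells fits in memory, so once a subgrid has been read, \textsc{NaiveAccumulation} runs on its interior without further \io; it remains only to bound the cost of reading each subgrid from $\id{FlowDir}$ and, in phase three, writing it to $\id{FlowAcc}$. A subgrid is stored as $z$ row-segments of $z$ cells each, so reading it costs at most $z(\lceil z/B\rceil + 1) = O(M/B + \sqrt{M})$ \ios. Summing over the $\Theta(N/M)$ subgrids gives $O(N/B) + O(N/\sqrt{M}) = O(\scan(N))$, where the fragmentation term $O(N/\sqrt{M})$ is absorbed using $N/\sqrt{M} = O(N/B)$; the writes in phase three are bounded identically.

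Next I would account for everything that touches $S$. In phase one each cell of $S$ lies on the boundary of at most four subgrids, so its entries in $\id{SFlowAcc}$ and $\id{SNeighbours}$ are written $O(1)$ times, and phase three reads each such cell $O(1)$ times; both contribute $O(|S|)$ accesses. The delicate case is phase two, which runs the potentially \io-unfriendly \textsc{NaiveAccumulation} on the reduced graph with vertex set $S$ and edges given by $\id{SNeighbours}$. The point is that each cell of $S$ has out-degree at most one in this graph, so maintaining for every cell a counter of its not-yet-finished in-neighbours (decremented whenever a cell is finished) lets the loop of lines 5--8 traverse each reduced-graph edge exactly once; the total work, and hence the number of accesses, is $O(|S|)$. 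Thus all $S$-related accesses over the three phases number $O(|S|) = O(\scan(N))$, and charging one \io to each keeps this part within $O(\scan(N))$ \ios.

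Adding the two contributions yields the claimed bound. I expect the main obstacle to be precisely phase two: \textsc{NaiveAccumulation} has a $\Theta(N)$ worst case on the full grid, so the argument must not rely on any locality of $S$ in memory. The escape is that $S$ is small enough --- which is exactly what the tall cache buys --- that the crudest accounting of one \io per cell touched already stays within $O(\scan(N))$. A secondary point to verify carefully is that the reduced graph really has out-degree one, i.e.\ that $\id{SNeighbours}[c]$ is well defined for each $c \in S$, so that the in-neighbour bookkeeping indeed keeps phase two linear.
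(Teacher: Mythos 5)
Your proposal is correct and follows essentially the same route as the paper: phases one and three cost $O(N/M)$ subgrids times $O(M/B + \sqrt{M})$ \ios each, phase two is a linear-time computation on the $O(N/\sqrt{M})$ cells of $S$ whose records are touched $O(1)$ times in the other phases, and the tall-cache assumption turns the resulting $O(N/B + N/\sqrt{M})$ into $O(\scan(N))$. The extra details you supply (the in-neighbour counters making phase two linear, the fragmentation term for row-segments) are correct elaborations of what the paper states more tersely.
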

\begin{proof}
In the first and the third phase, we process $O(N/M)$ blocks that each take $O(M/B + \sqrt M)$ \ios to read, can be processed completely in main memory, and---in the third phase---may take another $O(M/B + \sqrt M)$ \ios to write. The second phase runs a linear-time algorithm on a file of size $O(N/\sqrt M)$, of which each record is accessed a constant number of times in the first and the third phase. The total number of \ios is therefore $O(N/B + N/\sqrt M)$. By the tall-cache assumption, this is $O(\scan(N))$.
\end{proof}

\paragraph*{I/O-volume estimate for realistic values of $N$, $M$ and $B$}
\textsc{CacheAwareAccumulation} does not depend on the interplay between $B$, $M$ and $\gamma$ as much as the previous algorithms. Therefore it is possible to give a good estimate of the \io-volume. The bottleneck is the third phase. Assume we set up \id{SFlowAcc} and \id{SNeighbours} such that they first store, row by row, all rows that completely consist of cells of $S$, and then, column by column, the remaining cells of $S$. Then we need to choose $z$ such that the blocks containing $z$ rows of $z$ cells from \id{FlowAcc} and \id{FlowDir} fit in memory, plus 4 rows of $z$ cells from \id{SFlowAcc} and \id{SNeighbours}. In \id{FlowDir} we use 1 byte per cell, in the other files we use 8 bytes per cell. So $z$ must satisfy:\[
z \lceil 8z / B \rceil + z \lceil z/B \rceil + 4 \lceil 8z/B \rceil + 4 \lceil 8z/B \rceil \leq M/B
\]
A sufficient (but not necessary) condition is that $z$ is at most roughly $(\sqrt{1 + 8M/B^2} - 1) B/9$. This means that $z$ will typically be several thousands:
for $M = 2^{30}$ and $B = 2^{14}$ we would get $z = 8637$; for $M = 2^{30}$ and $B = 2^{16}$ we would get $z = 5330$.

Taking the number of subgrids times the number of rows times the number of blocks per row times the block size times two (for reading and writing), we find that the \io-volume for accessing \id{FlowAcc} in the third phase is now roughly
$N/z^2 \cdot z \cdot \lceil 8z/B\rceil \cdot B \cdot 2 \leq 16N + 2NB/z \approx N (16 + 18/(\sqrt{1 + 8M/B^2}-1))$.
From this we may subtract $8N$, since the blocks of \id{FlowAcc} do not need to be read from disk the first time they are accessed.
For accessing $\id{FlowDir}$ in the first and third phase (reading only) we get roughly
$N/z^2 \cdot z \cdot \lceil z/B\rceil \cdot B \cdot 2 \leq 2N + 2NB/z \approx N (2 + 18/(\sqrt{1 + 8M/B^2}-1))$.

The remaining \io is neglectible: for accessing $\id{SNeighbour}$ in the first phase we get roughly
$N/z^2 \cdot 4 \cdot \lceil 8z/B\rceil \cdot B \cdot 2 \leq 64N/z + 8NB/z^2$, which is roughly three orders of magnitude smaller than the amount of \io for accessing \id{FlowAcc}. The same is true for accessing \id{SFlowAcc} in the first and the third phase, and for accessing these files in the second phase: in practice $S$ will be small enough that the second phase can be done in main memory.

The \io-volume thus adds up to roughly $N (10 + 36/(\sqrt{8M/B^2+1}-1))$, dividing this by $9N$ (the input+output size) we get an ``overhead'' of factor $10/9 + 4/(\sqrt{8M/B^2+1}-1)$. For the example values of $M$ and $B$ just mentioned, the result ranges from 2.0 to 6.6.

\subsection{Cache-aware flow accumulation using separators with Z-order files}\label{sec:cacheawareaccumulationZorder}

The algorithm explained above depends on the tall-cache assumption. This is because the boundary of a square of $z \times z$ cells may cross $\Theta(z)$ blocks on disk, and without the tall-cache assumption, there is no guarantee that these will fit in memory for any $z = \Omega(\sqrt M)$. The need for the tall-cache assumption can be eliminated by using files in Z-order. Then any square $Q$ of $z \times z$ cells is contained in at most four squares of size $y \times y$, where $z/2 \leq y < z$, that are each stored contiguously on disk. The square $Q$ is thus stored in $O(4 \lceil y^2/B \rceil) = O(z^2/B)$ blocks, which will always fit for some $z = \Omega(\sqrt M)$. Thus we get:

\begin{theorem}\label{CacheAwareAccumulation}
Algorithm \textsc{CacheAwareAccumulation} on files in Z-order needs only $O(\scan(N))$ I/O's.
\end{theorem}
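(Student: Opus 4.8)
The plan is to reuse the analysis of the row-by-row version of \textsc{CacheAwareAccumulation} almost unchanged, isolating the two places where it invoked the tall-cache assumption and discharging each of them using the Z-order contiguity property stated in the paragraph preceding the theorem. Recall that analysis: we fix $z = \Theta(\sqrt M)$ and work in subgrids of $z \times z = \Theta(M)$ cells; in the first and third phase each subgrid is read (and, in the third phase, written) in full in main memory, while the second phase runs a linear-time computation on the separator file, whose $O(N/\sqrt M)$ records are each touched a constant number of times in the first and third phase. That proof arrived at $O(N/B + N/\sqrt M)$ and needed $M \ge cB^2$ twice: to absorb the per-subgrid term $\sqrt M$ into $M/B$, and to absorb the separator term $N/\sqrt M$ into $N/B$.

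First I would dispose of the per-subgrid term. In a row-by-row file the $z = \Theta(\sqrt M)$ rows of a subgrid occupy that many separate runs, each of which may straddle a block boundary, so gathering one subgrid costs $O(M/B + \sqrt M)$ \ios and only the tall cache makes the $\sqrt M$ negligible. With files in Z-order the preceding paragraph shows that the same $z \times z$ square is contained in at most four contiguously stored squares of side $y$ with $z/2 \le y < z$, hence lies in only $O(z^2/B) = O(M/B)$ blocks, and these fit in memory for some $z = \Omega(\sqrt M)$. Reading or writing a subgrid therefore costs $O(M/B)$ \ios with no additive term, for any $M = \Omega(B)$, and summing over the $O(N/M)$ subgrids the first and third phase cost $O(N/B) = O(\scan(N))$ with no tall-cache assumption.

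The main obstacle I anticipate is the second appearance of the tall cache, namely the $O(N/\sqrt M)$ contribution of the separator set $S$, and here there are two sub-costs to tame. The first is the touching of boundary cells in the first and third phase; I would store $\id{SFlowAcc}$ and $\id{SNeighbours}$ in Z-order as well, aligned with the subgrids, so that the $O(\sqrt M)$ boundary cells bordering each subgrid occupy $O(\sqrt M / B + 1)$ contiguous blocks and are read and written together with that subgrid. Summed over the subgrids this is $O(N/(\sqrt M\, B) + N/M) = O(N/B)$, batching away the term. The second sub-cost is the second phase itself, which is a flow-accumulation computation on the grid-like lattice of separator cells; running \textsc{NaiveAccumulation} on it directly could still cost $\Theta(|S|) = \Theta(N/\sqrt M)$ \ios in the worst case. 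The plan is to handle it either in memory, which as the \io-volume discussion notes already suffices for all realistic parameters, or, for an unconditional bound, by applying the same separator reduction recursively: since each level shrinks the cell count by a factor $\Theta(\sqrt M)$, the per-level costs form a geometric series dominated by its top term $O(\scan(|S|)) = O(N/(\sqrt M\, B)) = O(N/B)$.

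The step needing the most care is exactly this treatment of the separators: checking that the separator cells admit a Z-order layout aligned with the subgrids so that their accesses in the first and third phase really are contiguous, and confirming that the recursive (or in-memory) resolution of the second phase sums to $O(N/B)$ without smuggling a $\sqrt M$ factor back in through the fact that $S$ is a lattice of lines rather than a solid grid. Once that is settled, the bound $O(\scan(N))$ follows just as in the row-by-row case, now free of the tall-cache assumption.
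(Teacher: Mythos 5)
Your proposal is correct and, notably, more complete than what the paper itself provides: the paper states this theorem with no proof at all, relying solely on the preceding observation that in a Z-order file any $z \times z$ subgrid lies in at most four contiguously stored squares and hence in $O(z^2/B)$ blocks that fit in memory for some $z = \Omega(\sqrt M)$; the rest of the argument is implicitly inherited from the row-by-row analysis. You follow that same skeleton but correctly observe that the inheritance is not automatic. The row-by-row proof ends at the bound $O(N/B + N/\sqrt M)$ and invokes the tall cache a second time to absorb $N/\sqrt M$ into $N/B$, and that term comes not only from the additive $\sqrt M$ per subgrid (which Z-order contiguity does eliminate) but also from touching the $O(N/\sqrt M)$ separator records in phases one and three and from running the linear-time second phase on a file of that size --- neither of which the Z-order layout of the main grid addresses by itself. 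Your two patches, namely laying out \id{SFlowAcc} and \id{SNeighbours} in Z-order aligned with the subgrids so that each subgrid's boundary records are fetched in $O(\sqrt M/B + 1)$ \ios together with the subgrid, and resolving phase two either in memory or by a recursive separator reduction on $S$, are exactly what is needed to make the claim unconditional; you are also right that the recursion on $S$ is the delicate step, though it is helped by the fact that every edge recorded in \id{SNeighbours} stays within a single subgrid. The paper only remarks, in its \io-volume discussion, that in practice $S$ is small enough for phase two to run in main memory; it never discharges the $N/\sqrt M$ term in the worst case. Your argument therefore fills a genuine gap in the paper's own presentation rather than merely reproducing it.
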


\paragraph*{I/O-volume estimate for realistic values of $N$, $M$ and $B$}
To get a good \io-volume in practice, we will now make sure that each subgrid $Q$ is stored consecutively in the Z-order file. To achieve this, we do not let neighbouring subgrids share rows and columns anymore, but make sure the subgrids are disjoint and their height is a power of two. Furthermore, we let the loops of line 5 and line 16 go through the subgrids in Z-order.

As a result, the first phase only scans each block of the \id{FlowDir} file once, and so does the third phase; in addition the third phase writes each block of \id{FlowAcc} once (reading is not necessary). There may be up to eight times more cells in $S$ (two times more because subgrids do not share boundary cells anymore, and four times more because we need to round the subgrid width down to a power of two). However, as explained above, the accesses to \id{SNeighbour} and \id{SFlowAcc} are so few that they will still be neglectible. Thus the total \io-volume becomes roughly N + N + 8N bytes, which is 1.1 times the input+output size.

\subsection{Cache-oblivious flow accumulation using separators}\label{sec:cacheobliviousaccumulation}

We can make a cache-oblivious version of the separator-based algorithm by using a hierarchy of subgrids and separators as follows. We consider $k+1$ levels, numbered 0 to $k$, where $k$ is the smallest integer such that the input grid is contained in a $(2^k+1) \times (2^k+1)$ grid. On level $i$, we consider subgrids of size $(2^i + 1) \times (2^i + 1)$, that each share their boundary cells with the neighbouring subgrid. Let $S_i$ be the set of boundary cells of all subgrids on level $i$, and let $R_i$ be the union of all subgrid interiors on level $i$. Note that $R_0 = \emptyset$, $S_0$ is the full grid, and $S_k$ is the boundary of the full grid. We can think of the subgrids as being organised in a tree~${\cal H}$, where the leaves correspond to subgrids of size $2 \times 2$, and each internal node on level $i$ corresponds to a subgrid $Q$ on level $i$ with four children, namely the subgrids on level $i-1$ that lie inside $Q$. The root of~${\cal H}$ corresponds to the full grid.

The algorithm now proceeds in two phases. In the first phase we do a post-order traversal of~${\cal H}$. When processing a subgrid $Q$ at level $i$, we forward the rain that falls on each cell $c$ of $S_{i-1} \cap \id{interior}(Q)$ to the first cell of $S_i$ downstream of $c$, that is, to the boundary of $Q$. In the second phase we traverse ${\cal H}$ in reverse order (which is therefore a pre-order traversal). When processing a subgrid $Q$ at level $i$, we forward the rain that arrived at each cell $c$ on the boundary of $Q$ to all cells of $\id{interior}(Q) \cap S_{i-1}$ that lie on the path that leads downstream from $c$ until it reaches another cell on the boundary of $Q$ (or a cell without an out-neighbour). These two phases together result in all rain that falls on any grid cell $c$ to be forwarded to all cells downstream of~$c$.

To implement this approach efficiently, we use the post-order traversal to create a stack $A$ of pointers from boundary cells to internal cells that can be retrieved during the reverse traversal. More precisely, when processing a subgrid $Q$ at level $i$ in the post-order traversal, we store on $A$ the coordinates of the first cell of $S_{i-1}$ that lies downstream of $c$, for each cell $c$ on the boundary of $Q$ that has an out-neighbour inside $Q$.

\begin{theorem}\label{CacheAwareAccumulation}
With a tall cache of size $M \geq c B^2$, separator-based cache-oblivious flow accumulation needs only $O(\scan(N))$ I/O's.
\end{theorem}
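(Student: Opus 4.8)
The plan is to carry over the bookkeeping of the cache-aware analysis given above, but to obtain memory-locality from the recursive structure of ${\cal H}$ rather than from an explicitly chosen subgrid size. The whole computation is a depth-first traversal of ${\cal H}$ (post-order in phase one, the reverse, pre-order, in phase two), and the stack $A$ is touched in strict LIFO fashion, so its contribution is $O(\scan(|A|))$ with only $O(1)$ of its blocks resident at any time. Since processing a subgrid on level $i$ only ever reads or writes the $\Theta(2^i)$ cells of $S_{i-1}$ that lie inside it (the ``cross'' separating its four children), and there are $\Theta(N/4^i)$ subgrids on level $i$, the total work, and hence $|A|$, is $\sum_i \Theta(N/4^i)\cdot\Theta(2^i) = \Theta(N)$; so the stack alone costs $O(\scan(N))$.

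For the grid data I would split the levels at the \emph{critical} level $m$ for which a subgrid of side $2^m+1$ just fits in memory. By the tall-cache assumption there is such an $m$ with $4^m = \Theta(M)$: a $(2^m+1)\times(2^m+1)$ subgrid stored row by row spans $O(4^m/B + 2^m) = O(M/B + \sqrt M) = O(M/B)$ blocks, where the last step uses $B \le \sqrt M$. The crucial observation is that the depth-first traversal processes the entire subtree hanging from a level-$m$ subgrid $Q$ before leaving $Q$, and throughout that processing it references only cells lying inside $Q$ together with a contiguous segment of $A$. Hence, once the $O(M/B)$ blocks of $Q$ have been read, a reasonable caching policy (such as LRU, which is constant-competitive) keeps them resident until the subtree is finished, and the whole subtree is handled with no further \ios. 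Summed over the $O(N/4^m) = O(N/M)$ subgrids on level $m$ this costs $O(N/M)\cdot O(M/B) = O(\scan(N))$ \ios for reading, plus the same amount for writing in phase one and re-reading in phase two.

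It remains to bound the \io above the critical level. On each level $i > m$ the traversal touches only the $\Theta(N/2^i)$ separator cells of the crosses, routing flow across the already-processed children by means of the destinations computed lower down, so no interior cell is revisited. Charging one \io to each such cell gives $\sum_{i>m}\Theta(N/2^i) = O(N/2^m) = O(N/\sqrt M)$ \ios, and the tall-cache assumption ($B \le \sqrt M$) turns this into $O(N/B) = O(\scan(N))$. Adding the three contributions (stack, below $m$, above $m$) yields $O(\scan(N))$ in total.

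The step I expect to be the real work is the \emph{confinement} claim underlying the second paragraph: that processing the subtree below a level-$m$ subgrid $Q$ references nothing outside $Q$'s own cells and one contiguous run on $A$, so that the caching policy can hold $Q$ resident. This needs a careful check that the flow routing at every level inside $Q$ stays within $Q$ (which should follow because a subgrid shares its boundary with its neighbours and a local destination never points outside the current subgrid) and that the post-order pushes belonging to the subtree form one uninterrupted block of the stack. One must also confirm that $Q$'s data and the resident stack block fit simultaneously, which only forces $4^m$ to be a small enough constant times $M$ and does not affect the asymptotics. The remaining ingredients---the linear total work, the $O(M/B)$ per-subgrid loading cost, and the inequality $B \le \sqrt M$---are the same routine estimates already used for the cache-aware algorithm.
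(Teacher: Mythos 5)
Your proposal is correct and follows essentially the same route as the paper's own proof: the paper merely packages your level-by-level sums as the recurrences $P(K) = O(\sqrt K) + 4P(K/4)$ for the stack and $T(K) = O(\sqrt K) + 4T(K/4)$ with base case $T(c \cdot M) = O(M/B + \sqrt M) = O(M/B)$ for the traversal, which unroll to exactly your split at the critical level $4^m = \Theta(M)$ and your bounds $O(N)$ for $|A|$ and $O(N/\sqrt M + N/B) = O(\scan(N))$ overall. The confinement claim you flag is likewise taken for granted there (``we would simply load $Q$ into memory, do all the necessary processing in memory'').
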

\begin{proof}
For simplicity we assume that the input grid is square.
We first analyse the number of pointers $P(K)$ that are stored on $A$ for a subtree of ${\cal H}$ rooted at a subgrid $Q$ of size $K$ (where a cell on the boundary of two/four subgrids counts for 1/2 or 1/4 to each of them). Since pointers are stored only for cells on the boundary of $Q$, we get $P(K) = O(\sqrt K) + 4 P(K/4)$, which solves to $P(K) = O(K)$.

Next we analyse the number of \ios $T(K)$ for the post-order traversal of a subtree of ${\cal H}$ rooted at a subgrid $Q$ of size $K$. For a sufficiently small constant $c$, we have $T(c \cdot M) = O(M/B + \sqrt M)$, since we would simply load $Q$ into memory, do all the necessary processing in memory, and then write $Q$ back to disk while  pushing $O(M)$ pointers for $Q$ and its subgrids on the stack. By the tall-cache assumption, we thus have $T(c \cdot M) = O(M/B)$. For subgrids $Q$ of size $K > c \cdot M$, we need to make the recursive calls, read the flow values for $O(\sqrt K)$ cells and the $O(\sqrt K)$ pointers stored on $A$ for the four subgrids of $Q$, and then use this information to forward flow to $O(\sqrt K)$ cells, compute $O(\sqrt K)$ pointers for $Q$ and push these onto $A$. Forwarding the flow and computing the pointers can easily be done in $O(\sqrt K)$ time and \ios, so that we get: $T(K) = O(\sqrt K) + 4 T(K/4)$.
With base case $T(c \cdot M) = O(M/B)$, this solves to $T(N) = O(N/\sqrt M + N/B) = O(\scan(N))$.
The analysis of the reverse traversal is similar to the analysis of the post-order traversal, so that we get $O(\scan(N))$ \ios in total.
\end{proof}

Although the cache-oblivious algorithms needs only $O(\scan(N))$ \ios, it is not as efficient as the cache-aware algorithm of the previous subsection. The cache-oblivious algorithm as just described is expensive in two ways. First, the number of pointers over all levels sums up to approximately $\frac83 N$. Except a small number on the highest levels, all of these pointers are written to and read from disk at least once. If a pointer is stored in 16 bytes (coordinates of source and destination), this amounts to an \io-volume of almost 10 times the input+output size. To alleviate this problem, an efficient implementation should use larger subgrids as the base case (for example $17 \times 17$ instead of $2 \times 2$). Second, we need to maintain flow accumulation values for boundary cells. Doing so directly in the output file is expensive, because on a vertical subgrid boundary, every cell will be in a different block. Although in the asymptotic analysis this works out, it causes a significant constant-factor overhead. Storing the grid in Z-order helps to some extent (as confirmed by our experiments, see Section~\ref{sec:evaluation}). Alternatively one could use the stack with pointers to store intermediate flow accumulation values (similar to $\id{SFlowAcc}$ in \proc{CacheAwareAccumulation}), but this would further increase the \io-volume for stack operations.

\subsection{Previous work: flow accumulation with time-forward processing}\label{sec:tfpaccumulation}

In \tfl and \tsm~\cite{terraflow,terrastream}, the following algorithm is used to compute flow accumulation. We first create a stream of cells in topological order. In principle it would suffice to sort the cells by decreasing elevation, but this would not work for flat areas. Therefore we will assume that after flooding, a flow routing phase has produced an additional file with a topological number for each cell. We now create a stream of cells that lists for each cell its location, its topological number and the topological number of its out-neighbour. We sort this stream by the first topological number.

We now apply a technique called time-forward processing: we maintain a priority queue in which flow values are stored with the topological number of the cell to which the flow is going. The algorithm processes all cells one by one in topological order; when processing a cell, it extracts its incoming flow from the priority queue, adds one unit of rain, writes the resulting total to an output stream together with the coordinates of the cell, and finally forwards the total flow to the out-neighbour by entering it in the priority queue with the out-neighbour's topological number as key. The algorithm finishes by sorting the flow accumulation values from the output stream into a row-by-row grid.

Using an \io-efficient priority queue, the algorithm clearly runs in $O(\sort(N))$ \ios~\cite{terraflow,gridproblems}

\paragraph*{I/O-volume estimate for realistic values of $N$, $M$ and $B$}
For an estimate of the \io-volume in a practical setting with a grid of size $N = 2^{32}$ and a memory size of $M = 2^{30}$, we consider two scenarios. In the optimistic scenario, we have block size $B = 2^{14}$, only 1/3 of the cells contain data, and the priority queue is completely maintained in memory at all times. In the pessimistic (but nevertheless quite realistic) scenario, we have block size $B = 2^{16}$, all cells contain data, and the records that pass through the priority queue are written to and read from disk once on average.

In the optimistic scenario, we start with scanning the input file with flow directions and the file with topological numbers, scanning the latter with a window of $3 \times 3$ cells to be able to retrieve topological numbers of neighbours. Assuming three rows of the grid fit in memory, this amounts to scanning 9 bytes per cell. While doing so, we generate a stream of cells with location, topological number, and topological number of the out-neighbour, which is fed to the first pass of a sorting algorithm. The sorting algorithm writes the partially sorted stream to disk (24 bytes per actual data cell = 8 bytes per grid cell). Since $N/3 \leq (M/B)^2$, the sorting algorithm needs only two passes; the second pass results in 16 bytes of \io per cell (24 for reading, 24 for writing, for 1/3 of the cells). The time-forward processing phase reads the sorted stream (24 bytes $\times$ 1/3 of the cells) and eventually outputs locations and flow accumulations for all data cells (16 bytes $\times$ 1/3 of the cells). The output then needs to be sorted: the first pass results in $10\frac23$ bytes of \io per cell (32 bytes per data cell). The second pass reads $5\frac13$ bytes per cell and writes 8 bytes per grid cell (flow accumulations only, but also for non-data cells). In total, we transfer $70\frac13$ bytes per grid cell. For a fair comparison, we do not count the input file with topological numbers towards the size of the input and output---after all, all other algorithms presented in this paper do not even need such a file. So $70\frac13$ bytes per grid cell amounts to 7.8 times the input+output size.

In the pessimistic scenario, we need three sorting passes, and the priority queue needs the disk. Filling in the details of the computation above, we get an \io-volume of 289 bytes per cell, which is 32 times the input+output size.

\subsection{A brief comparison of flow-accumulation algorithms}\label{sec:evaluation}
We implemented some of our algorithms and tested them on elevation models of the Netherlands and of the Neuse watershed in North Carolina, using a Dell Optiplex GX260 computer, equipped with a 3 GHz Pentium 4 processor and 1 GB of RAM, Ubuntu 7.04 (kernel 2.60.20-16) (installed on a 80 GB Samsung HD080HJ hard disk), and a 250 GB Seagate ST320506AS hard disk. We report the results for the largest data set: a grid of $70\,520 \times 50\,220$ cells modelling the Neuse basin; 35\% of the grid cells contain data.

Our results are shown in Table~\ref{tab:runningtimes}. For comparison, the authors of \tsm reported 455 minutes for flow accumulation of a grid of similar size, on hardware that appeared to be faster than ours~\cite{terrastream}.
Our cache-aware flow accumulation algorithm thus seems to be an order of magnitude faster. Note that the \io-volume estimates given in Section~\ref{sec:cacheawareaccumulation} and Section~\ref{sec:tfpaccumulation} would predict a difference of a factor four in the optimistic setting, where we assume that \tsm's time-forward processing would manage to do with sorting in two passes and keeping the priority queue in main memory. Although we have not run direct comparisons with the latest version of \tsm, our analysis indicates that the performance difference must remain significant because it is inherent to the characteristics of the different algorithms.

\begin{table}
\centering\def\arraystretch{1.1}
\begin{tabular}{|l|rr|rr|}
\hline
algorithm                      & \multicolumn{2}{c|}{using Seagate disk} & \multicolumn{2}{c|}{using both disks} \\
&  \multicolumn{1}{l}{time} & CPU usage                & \multicolumn{1}{l}{time} & CPU usage\\

\hline
``na\"ive'' row-by-row scan of row-by-row file & 111 min. & 22\% & & \\
``na\"ive'' Z-order scan of Z-order file    & 41 min. & 26\% & 34 min. & 32\% \\
cache-aware separator alg. on row-by-row file & 39 min. & 18\% & 25 min. & 26\% \\
\hline
\end{tabular}
\caption{Running times in minutes and CPU usage for flow accumulation of a grid of $3.5 \cdot 10^9$ cells.}
\label{tab:runningtimes}
\end{table}

Our results also indicate that our \io-na\"ive algorithms perform surprisingly well, especially when working on data in Z-order. At first sight our theoretical analysis seems to explain this: under the confluence assumption the \io-na\"ive algorithm on Z-order files needs only $O(\scan(N))$ \ios. However, if we try to estimate the \io-volume by filling in the constant factors in the computation of the asymptotic bound, then we end up with an \io-volume bound of dozens (using Z-order) or thousands (using row-by-row order) times the input+output size. It seems that the surprisingly good performance in practice must be due to the fact that for modest values of $c$ we can fit $c \sqrt B$ rows of the grid in memory. Thus the main scan can keep blocks in memory long enough so that each block only needs to be loaded once, and apparently the inner loop of the algorithm does not cause as many disruptions as the theoretical analysis might suggest. All things considered this means that these na\"ive algorithms are surprisingly fast, but we cannot rule out that their efficiency may depend on the ratio of $N$ and $M$ and on the characteristics of the terrain.

In contrast, the efficiency of our cache-aware algorithm is supported firmly by our theoretical analysis. Note that our theoretical analysis also indicates that the algorithm should be much faster still when working on Z-order files, requiring little more than two scans of the input file and a single scan of the output file (further experiments should confirm this). However, converting to Z-order takes time too (converting a grid of this size with 8 bytes per cell took 88 minutes). Whether it pays off to use files in Z-order may therefore depend on the context. Storing temporary files in the pipeline in Z-order seems to be a good idea; files that need to be processed by other software may better be kept in row-by-row order.

We also implemented the cache-oblivious algorithm, but until now this was significantly slower than the cache-aware algorithm (even when the subgrid size of the latter was not tuned optimally) so the cache-oblivious implementation would need to be optimised further to be competitive.

\section{Applying grid-based techniques to other parts of the pipeline}
\label{sec:otherstages}

\subsection{Flooding}
The flooding problem takes as input a file $\id{Elevation}$ that stores the elevation of each cell. A path in the grid is a sequence of cells such that each pair of consecutive cells on the path are neighbours of each other (each cell that is not on the boundary has eight neighbours). The flooding problem in its basic form is to compute to what elevation each cell should be raised, so that from each cell there is a non-ascending path to a cell on the boundary of the grid. If we define the height of a path as the elevation of the highest cell on the path, the problem is equivalent to determining, for each cell $c$, the height of the lowest path from $c$ to the boundary. The required output is a file $\id{Flooded}$ that stores these heights for each cell $c$.

\paragraph{Time-forward processing and I/O-na\"ive flooding}
In 2003 Arge et al.\ described two algorithms for the flooding problem~\cite{terraflow}. The first algorithm proceeds in three phases.
First we follow the path from each cell downhill until a sink, a cell without an out-neighbour, is reached. Each cell $c$ is labelled with a pointer $\pi(c)$ to the sink at the end of the path that goes downhill from $c$. This labelling constitutes a decomposition of the terrain into watersheds and is used to build a \emph{watershed graph} $W$. The watershed graph has one node for every sink, and for every pair of adjacent watersheds with sinks $u$ and $v$, the graph contains an edge $(u,v)$ with height equal to the lowest pass between $u$ and $v$. In other words, the elevation of $(u,v)$ is the minimum of $\max(\mathit{elevation}(c),\mathit{elevation}(d))$ over all pairs of neighbouring cells $c,d$ such that $\pi(c) = u$ and $\pi(d) = v$.
In the second phase an algorithm is run on $W$ to determine to what height each watershed should be flooded. In the third phase, we scan the complete terrain and replace each cell's elevation by the maximum of its original elevation and its watershed's flood height.

Arge et al.\ use time-forward processing for the first phase, so that the complete algorithm needs $O(\sort(N))$ \ios, assuming $W$ can be kept in main memory. Instead we could consider using an \io-na\"ive algorithm similar to the one described in Section~\ref{sec:naiveacc} for the first phase, and get $O(\scan(N))$ \ios under the confluence assumption. We found that in practice, this did not work well with row-by-row-ordered files. Using files in Z-order we could flood the Neuse grid mentioned before in 435 minutes: a running time of roughly the same order of magnitude as Terrastream's~\cite{terrastream}.

\paragraph{Separator-based algorithms}
The second algorithm sketched by Arge et al.\ seems quite similar to our cache-aware separator-based flow accumulation algorithm. As in Section~\ref{sec:cacheawareaccumulation}, the idea is to first process subgrids of $\Theta(\sqrt M) \times \Theta(\sqrt M)$ cells that fit in memory. For each subgrid $Q$ we compute a `substitute' graph that encodes the lowest-path heights between each pair of cells on the boundary of $Q$. In the second phase of the algorithm we would combine the substitute graphs for all subgrids into one graph, which is used to compute the lowest-path heights from each cell of $S$ to the boundary of the grid, where $S$ is the set of boundary cells of all subgrids. Finally, in the third phase each subgrid $Q$ is processed again, now to compute the lowest-path height from each cell of $Q$ to the boundary of the grid, using the previously computed lowest-path heights for the cells on the boundary of $Q$.

The key to success is the size of the substitute graphs. The algorithm by Arge et al.~\cite{gridproblems} for single-source \emph{shortest} paths works in a similar way, but needs substitute graphs of size $\Theta(z^2)$ to encode the shortest-path distances between all pairs of cells on the boundary of a $z \times z$ subgrid. However, for \emph{lowest} paths substitute graphs of size $\Theta(z)$ suffice. Such a graph can be created from the elevations and neighbour relations in a subgrid $Q$ as follows. Consider the subgrid $Q$ as a graph, whose edges represent the neighbour relations in the grid, where each edge has elevation equal to its highest vertex. Compute the lowest paths to the boundary of $Q$ for all cells in $\mathit{interior}(Q)$. Next, contract all directed edges $(u,v)$ of those lowest paths one by one, replacing each undirected edge $(w,u)$ of the graph with an edge $(w,v)$ with elevation $\max(\mathit{elevation}(w,u),\mathit{elevation}(u,v))$. Whenever there are multiple edges between the same pair of vertices, keep only the edge with the lowest elevation. This results in a substitute graph whose vertices are the boundary vertices of $Q$ and which preserves the lowest-path heights between each pair of vertices. Since the substitute graph thus constructed is planar, it has size $\Theta(z) = \Theta(\sqrt M)$.

We implemented a separator-based flooding algorithm as described above, and found that it could process the Neuse grid in 146 minutes, using row-by-row-ordered files on one disk, or in 132 minutes on two disks. As with our flow accumulation algorithm, we believe further efficiency gains could be achieved by using files in Z-order.

\paragraph{Partial flooding?}
It should be noted that a direct comparison between our algorithms and \textsc{Terra-Stream} cannot be made. \tsm offers the functionality of \emph{partial flooding}: eliminating only insignificant depressions while not flooding major depressions. Our algorithms do not do this. A major open question is therefore how the grid structure could be exploited to design an algorithm that can do very fast \emph{partial} flooding.

\subsection{Other parts of the pipeline}
\tsm's hierarchical watershed labelling algorithm~\cite{pfafstetter,terrastream} uses time-forward processing to pass labels upwards into the river network. This is not very different from how time-forward processing is used for flooding or for flow accumulation, and one may expect that grid-based algorithms (\io-na\"ive or cache-aware) may help here too. Open questions include whether grid-based algorithms, maybe together with assumptions on realistic terrains, could help to simplify and speed up flow routing on flat areas and to do flow accumulation with multiple-flow directions, an approach where each cell sends water to all of its lower neighbours instead of just one of them.

\section{Conclusions and remaining work}
We have shown that certain hydrological computations on terrain data may be sped up by an order of magnitude by exploiting the grid structure of the data and/or by storing grids in Z-order rather than row-by-row order. A striking result is that one of the algorithms with the best performance is actually so simple that it is almost na\"ive. Some of the most prominent questions that remain unanswered at this point are the following.

\emph{What would be typical values for the confluence constant?}
It would be interesting to design an algorithm that can compute the confluence constant for any given grid terrain. Then we may investigate to what extent the confluence constant is indeed independent of the sampling density of a terrain, and what are typical values for the confluence constant for different types of terrains.

\emph{How would the cache-aware separator-based algorithm perform on files in Z-order?}
Unfortunately we did not have time to implement and run these tests yet, and we hope to be able to do so some time in the future.

\emph{Can we exploit the grid structure to design equally efficient flow routing algorithms?}
If yes, then the complete part of the pipeline from elevation model to hierarchical watershed labels could probably be sped up tremendously. In that case, even if files in row-by-row order would be desired at the input and output end of the pipeline, it could pay off to convert them to Z-order and to use files in Z-order for the intermediate stages. A disadvantage of our current algorithm is that multiple-flow direction models cannot be handled, but for hierarchical watershed labelling such models cannot be used and single-flow direction models---those handled by our algorithms---are exactly what is needed.

\paragraph*{Acknowledgements}
The authors thank Andrew Danner for providing test data and assistance with \tsm.
We thank Laura Toma and the students of the 2006 class of \io-efficient algorithms
at the TU Eindhoven for inspiring discussions.

\small
\bibliographystyle{abbrv}

\begin{thebibliography}{10}
\bibitem{cloud}
P. K. Agarwal, L. Arge, and A. Danner.
From point cloud to grid DEM: a scalable approach.
In {\em Proc. 12th Int. Symp. Spatial Data Handling} (SDH 2006), pages 771--788, Springer, 2006.

\bibitem{unionfind}
P. K. Agarwal, L. Arge, and K. Yi.
I/O-efficient batched union-find and its applications to terrain analysis.
In {\em Proc. 22nd Ann. Symp. Computational Geometry} (SoCG 2006), pages 167--176, 2006.

\bibitem{iomodel}
A. Aggarwal and J. S. Vitter.
The input/output complexity of sorting and related problems.
{\em Communications of the ACM}, 31(9):1116--1127, 1988.

\bibitem{terraflow}
L.~Arge, J.~Chase, P.~Halpin, L.~Toma, D.~Urban, J.~S.~Vitter, and R.~Wickremesinghe.
Flow computation on massive grid terrains.
{\em GeoInformatica}, 7(4):283--313, 2003.

\bibitem{pfafstetter}
L. Arge, A. Danner, H. Haverkort, and N. Zeh.
I/O-efficient hierarchical watershed decomposition of grid terrain models.
In {\em Proc. 12th Int. Symp. Spatial Data Handling} (SDH 2006), pages 825--844, Springer, 2006.

\bibitem{gridproblems}
L. Arge, L. Toma, and J. S. Vitter.
I/O-efficient algorithms for problems on grid-based terrains.
{\em ACM J. Experimental Algorithmics}, 6(1), 2001.

\bibitem{terrastream}
A.~Danner, T.~M\o lhave, K.~Yi, P.~K.~Agarwal, L.~Arge, and H.~Mitasova.
\textsc{TerraStream:} from elevation data to watershed hierarchies.
In {\em Proc. 15th Ann. ACM Symp. Advances in Geographic Information Systems.} (GIS 2007).

\bibitem{flpr-coa-99}
M.~Frigo, C.~E. Leiserson, H.~Prokop, and S.~Ramachandran.
Cache-oblivious algorithms.
In {\em Proc. 40th Annu. IEEE Symp. Found. Comput. Sci.}, pages
  285--298, 1999.

\bibitem{bitmanipulation}
J.~Thiyagalingam and P. Kelly.
Is Morton layout competitive for large two-dimensional arrays?
In {\em Proc. 8th Int. Euro-Par Conf. Parallel Processing, 2002,} LNCS 2400:137--157.

\end{thebibliography}

\end{document}